\documentclass[12pt, onecolumn]{article}

\usepackage[dvipdfmx]{graphicx}
\usepackage[utf8]{inputenc}
\usepackage[T1]{fontenc}
\usepackage[numbers]{natbib}
\usepackage{amsmath}
\usepackage{amsthm}
\usepackage{amssymb}
\usepackage{algorithm}
\usepackage{algorithmic}
\usepackage{ecltree}
\usepackage{enumerate}
\usepackage{xcolor}
\usepackage{comment}
\usepackage{tikz}
\usepackage{wrapfig}
\usepackage{mathrsfs}
\usepackage{mathtools}
\usepackage[margin=25truemm]{geometry}
\mathtoolsset{showonlyrefs=true}

\usepackage{bm}
\usepackage{bbm}
\usepackage{float}
\usepackage{cases}
\usepackage{longtable}
\usepackage{diagbox}
\usepackage{booktabs}

\newtheorem{proposition}{Proposition}
\newtheorem{lemma}[proposition]{Lemma}
\newtheorem{theorem}[proposition]{Theorem}

\theoremstyle{definition}
\newtheorem{definition}[proposition]{Definition}

\newtheorem{ruleset}[proposition]{Ruleset}

\newcommand{\NN}{{\mathcal{N}}}
\newcommand{\PP}{{\mathcal{P}}}

\newcommand{\uint}{\mathbb{Z}_{\geq 0}}

\DeclareMathOperator*{\mex}{mex}

\newcommand{\legalS}{\mathcal{S}_{\mathrm{legal}}}
\newcommand{\legalH}{\mathcal{H}_{\mathrm{legal}}}
\newcommand{\goodH}{\mathcal{H}_{\mathrm{good}}}
\newcommand{\subH}{\mathcal{H}}
\newcommand{\gpg}{g_{\mathrm{PG}}}
\newcommand{\gat}{g_{\mathrm{AT}}}

\newcommand{\eqlab}[2]{\overset{(\mathrm{#1})}{#2}}

\makeatletter

\@addtoreset{equation}{section}
\makeatother

\title{The Two-Level Poset Game is \textsf{PSPACE}-Complete}
\author{Kengo Hashimoto}
\date{}

\begin{document}
\maketitle

\begin{abstract}
Poset games form a fundamental class of combinatorial games that includes classical games such as Nim and Chomp.
Determining the outcome of a poset game is known to be \textsf{PSPACE}-complete even for posets of height at most three.
However, the exact computational complexity of \textsc{Two-Level Posetgame}, played on posets of height at most two, has remained an open problem for over a decade.
We resolve this problem by proving that \textsc{Two-Level Posetgame} is \textsf{PSPACE}-complete via a polynomial-time reduction from \textsc{Avoid True} on positive CNF formulas.
\end{abstract}

\section{Introduction}
Combinatorial games (i.e., two-player games without chance elements or hidden information) played on structured combinatorial objects have been a major topic in combinatorial game theory.
\emph{Poset games} form a large class of combinatorial games including many classical and fundamental rulesets such as \textsc{Nim} \cite{Bou1901}, \textsc{Chomp} \cite{Gale1974}, \textsc{Subset Take-Away} \cite{Gale1982}, and \textsc{Green Hackenbush} on a rooted tree \cite{BCG18}.
A poset game is played as follows.

\begin{ruleset}[poset game]
Given a finite partially ordered set (poset) $(P, \le)$, two players alternately choose an element $x \in P$.
The chosen element $x$ and all elements $y \in P$ such that $y > x$ are removed from $P$.
Under the normal play convention, the player who cannot make a move loses.
\end{ruleset}

Grier~\cite{Grier2013} proved that the problem of determining the \emph{outcome} (i.e., which player has a winning strategy) of an arbitrary poset game is \textsf{PSPACE}-complete via a reduction from \textsc{Node Kayles}.
On the other hand, while the construction in Grier's reduction produces posets of height at most three, the exact computational complexity of \textsc{Two-Level Posetgame}, the poset game on a poset of height at most two, has remained an open problem for over a decade~\cite{FGG22, Fenner2013}.

In this paper, we resolve this open problem by proving that \textsc{Two-Level Posetgame} is \textsf{PSPACE}-complete as follows.
\begin{theorem}
\label{thm:main}
The problem of determining the outcome of \textsc{Two-Level Posetgame} is \textsf{PSPACE}-complete.
\end{theorem}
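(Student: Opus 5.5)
Membership in \textsf{PSPACE} is standard. A two-level poset game on $n$ elements lasts at most $n$ moves, since each move removes at least one element. A depth-first minimax search over the game tree therefore needs only polynomial space, because it stores just the current path of removed elements. The substance of the theorem is \textsf{PSPACE}-hardness. As the abstract indicates, we reduce from \textsc{Avoid True} on positive CNF formulas, which is \textsf{PSPACE}-complete by Schaefer's classical result.

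An instance of \textsc{Avoid True} consists of a positive CNF formula $\varphi$ over variables $x_1,\dots,x_n$. Players alternately set a not-yet-chosen variable to true. The player who first makes $\varphi$ true loses; equivalently, a player must not complete the satisfaction of all clauses.

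The plan is to build a height-two poset with the following structure.
\begin{itemize}
\item The lower level contains one element $v_i$ for each variable $x_i$, together with auxiliary padding elements.
\item The upper level contains one element $c_j$ for each clause $C_j$, placed above exactly the $v_i$ with $x_i\in C_j$. Further upper elements serve as gadgets.
\end{itemize}
Choosing $v_i$ simulates setting $x_i$ true: it deletes every clause element containing $x_i$, that is, every satisfied clause. Once $\varphi$ is satisfied, all clause elements are gone.

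The remaining positions then decompose. What is left of the lower level forms a union of one-element components, and upper elements hang above the unchosen variables. We aim to control the Grundy value $\gpg$ of the whole position so that it mirrors the \textsc{Avoid True} outcome $\gat$. Concretely, we add upper-level "shield" elements, each of which lies above all variable elements. We also fix a parity of padding elements. These ensure that (i) choosing an upper-level element or a padding element while some clause is still unsatisfied is a losing deviation, and (ii) once all clauses are removed, the residual position has a Grundy value of known parity, which forces the player who satisfied $\varphi$ to lose.

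The proof would proceed in the following order.
\begin{enumerate}
\item Define the construction, including the clause elements, the shields and the padding.
\item Characterise the "legal" positions reachable by well-behaved play, in the sense of $\legalS$ and $\legalH$. In such positions the chosen set is a set of variables, no clause is yet fully satisfied, and nothing else has been touched.
\item Prove by induction on the number of unchosen elements that in every legal position $\gpg$ is zero exactly when the corresponding \textsc{Avoid True} position is a $\PP$-position.
\item Handle deviations. A move to a non-variable element must be shown to give the opponent a winning reply, most likely by a strategy-stealing or pairing argument that returns the game to a legal position of the correct parity.
\end{enumerate}

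Step 4 is the main obstacle. In height two, choosing any minimal element kills all of its upper covers. This gives the deviating player considerable power, because removing a single lower element can shift parities in many components at once.

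My first attempt at step 4 is the following. Make every variable element lie below a common family of $2k$ shield elements. With an even number of them, shields contribute nothing parity-wise while they are all alive. Any move touching a shield is answered by the symmetric shield move, so every such deviation can be neutralised by pairing.

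I would then compute the endgame explicitly once all clauses are satisfied. Since no clause elements remain, the position reduces to a star-like structure, and its Grundy value can be computed in closed form from the numbers of remaining variable elements and shields. Finally, I would tune the padding so that this endgame value is nonzero exactly for the player to move after $\varphi$ becomes true, so the player who satisfied $\varphi$ loses.
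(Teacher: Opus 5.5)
\textbf{There is a genuine gap in the hardness part.} Your membership argument, the choice of \textsc{Avoid True}, and the legal-position framework match the paper. The gap is your step 4, and the mechanism you propose there fails.

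\begin{enumerate}
\item \emph{The shields vanish immediately.} A shield lies above every variable element, so it is deleted the moment any variable element is chosen. After the first simulated move no shields remain, so the pairing argument protects nothing. The shields are also not a separate summand, so a Tweedledum--Tweedledee argument does not apply to them anyway.
\item \emph{Direct clause removal is uncontrolled.} Nothing in your construction handles a player choosing a clause element $c_j$ directly. That move deletes a clause without setting any variable. No reply returns the game to a legal position, and the result is effectively a different \textsc{Avoid True} instance with an unrelated outcome.
\item \emph{The endgame value is not fixed.} When the last clause element disappears, all shields are already gone, so only isolated minimal elements remain. Their Grundy value is the parity of (unchosen variables $+$ padding). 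That parity depends on how many variables were set before $\varphi$ became true, which varies between lines of play. So no fixed padding makes the endgame nonzero in every case.
\end{enumerate}

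\textbf{The missing idea is to control the exact Grundy values of all deviating positions,} rather than arguing by pairing that deviations lose. The paper does this as follows.
\begin{itemize}
\item Every clause hyperedge is made odd: $e_{i,j}=C_i\cup\{u\}\cup D_{i,j}$.
\item Each clause is doubled, so the number of clause hyperedges is positive and even in every legal position.
\item One hyperedge $e_M$ is placed over all auxiliary vertices; it acts as a $*2$ while intact.
\end{itemize}
Fenner et al.'s formula $g(H)=(|V|\bmod 2)\oplus 3(|E|\bmod 2)$ for odd hypergraphs, plus the auxiliary induction over good positions (Lemma~\ref{lem:main_eval}), then shows the following. Every illegal move from a legal position lands on a value in $\{0,1,2,3\}$, and all four values are attained. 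The illegal moves include removing a single clause hyperedge, which is your problematic direct clause deletion, and removing a variable that kills all clause hyperedges, which is your endgame.

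This yields $\gpg(H_S)=4+\gat(S)$ by a mex shift. Adding a gadget $Q$ with $\gpg(Q)=4$ turns this into the outcome equivalence. Your intended invariant ``$\gpg=0$ iff $\PP$ in \textsc{Avoid True}'' does not hold even in this working construction. There, some deviations have value $0$ and legal positions never do. The statement that actually carries the induction is an exact-value one.
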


To establish Theorem~\ref{thm:main}, we present a polynomial-time reduction from
the problem of determining the outcome of \textsc{Avoid True} on positive CNF formulas, which is known to be \textsf{PSPACE}-complete~\cite{Schaefer1978}.

The remainder of this paper is organized as follows.
In the remainder of this section, we review other related work on \textsc{Two-Level Posetgame}.
In Section~\ref{sec:preliminaries}, we introduce basic definitions and foundational properties: Sprague--Grundy theory, \textsc{Two-Level Posetgame}, and \textsc{Avoid True} on positive CNF formulas.
In Section~\ref{sec:proof}, we provide the proof of Theorem~\ref{thm:main}.

\subsection{Related Work}
Fenner et al.~\cite{Fenner2013} provided formulas for Grundy values for restricted classes of \textsc{Two-Level Posetgame}.
In particular, their results give a formula for the case in which every hyperedge has odd cardinality, which we use in our reduction.

Because a poset of \textsc{Two-Level Posetgame} consists of a lower level consisting of all minimal elements and an upper level consisting of the remaining elements, the rules and positions can be naturally rephrased in hypergraph terms: selecting a vertex removes it and all hyperedges containing it, whereas selecting a hyperedge removes only that hyperedge.
Barnard~\cite[Chapter 2]{barnard2017} and Molena~\cite{Molena2022} investigated \textsc{Two-Level Posetgame} under this hypergraph formulation.
Regarding the particular case restricted to ordinary graphs,
O'Sullivan~\cite{OSullivan2018} derived explicit formulas for Grundy values for several classes of graphs.

\section{Preliminaries}\label{sec:preliminaries}

For a positive integer $n$, let $[n] \coloneqq \{1, 2, \ldots, n\}$.

\subsection{Sprague--Grundy Theory}

All games considered in this paper are finite, impartial (i.e., the sets of moves of the two players are the same), and played under the normal play convention (i.e., the player who cannot make a move loses).
By the Sprague--Grundy Theorem~\cite{Grundy1939, Sprague1935}, every position $G$ in an impartial game is equivalent to a single Nim-pile of size $g(G)$ for some $g(G) \in \uint$.
The value $g(G) \in \uint$ is called the \emph{Grundy value} and is given recursively as
\begin{equation}
g(G) = \mex \{ g(G') \mid G \to G' \},
\end{equation}
where $G \to G'$ denotes that there exists a move from $G$ to a position $G'$, and $\mex(S) \coloneqq \min(\uint \setminus S)$ for a set $S \subsetneq \uint$.
A position $G$ is a $\PP$-position (i.e., second-player win) if and only if $g(G) = 0$; otherwise, it is an $\NN$-position (i.e., first-player win).

For two positions $G_1$ and $G_2$, the \emph{disjunctive sum} $G_1 + G_2$ of $G_1$ and $G_2$ is the position of the compound game of $G_1$ and $G_2$ where a player makes a move on exactly one of $G_1$ and $G_2$.
The Grundy value of $G_1 + G_2$ is obtained from the Grundy values of the individual components as
\begin{equation}
g(G_1 + G_2) = g(G_1) \oplus g(G_2),
\end{equation}
where $\oplus$ denotes the \emph{nim-sum} (i.e., the bitwise XOR operation).

\subsection{\textsc{Two-Level Posetgame}}

A poset $P = (V \sqcup E, \le)$ for \textsc{Two-Level Posetgame} consists of the set $V$ of minimal elements and the set $E$ of the other elements. By identifying $V$ with the set of vertices and $E$ with the set of hyperedges where $v \in e$ if and only if $v \le e$,
the rules of \textsc{Two-Level Posetgame} on $P$ translate directly as follows.

\begin{ruleset}
Initially, a hypergraph $H = (V, E)$ is given. In each turn, a player removes either a vertex or a hyperedge of $H$.
When a vertex $v$ is removed, all hyperedges containing $v$ are automatically removed as well. The player who cannot make a move because $H$ becomes empty loses.
\end{ruleset}
The hypergraphs for \textsc{Two-Level Posetgame} may contain duplicate hyperedges. \allowbreak
Throughout this paper, we adopt this hypergraph representation of \textsc{Two-Level Posetgame} described above.

The following explicit formula is known for \textsc{Two-Level Posetgame} where every hyperedge contains an odd number of vertices.
\begin{proposition}[\cite{Fenner2013}]
\label{prop:odd_formula}
Let $H = (V, E)$ be a hypergraph such that every hyperedge $e \in E$ has odd cardinality.
Then the Grundy value $g(H)$ of the initial position of \textsc{Two-Level Posetgame} on $H$ is given by
\begin{equation}
\label{eq:formula2}
g(H) = (|V| \bmod 2) \oplus 3(|E| \bmod 2).
\end{equation}
\end{proposition}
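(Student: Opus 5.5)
We prove Proposition~\ref{prop:odd_formula} by strong induction on $|V|+|E|$, using the recursion $g(H)=\mex\{g(H')\mid H\to H'\}$. The class of hypergraphs with all hyperedges of odd cardinality is closed under moves: removing a hyperedge leaves the other hyperedges untouched, and removing a vertex $v$ deletes every hyperedge containing $v$ while the remaining hyperedges are unchanged. So the induction hypothesis applies to every option. Write $a=|V|\bmod 2$ and $b=|E|\bmod 2$. The claimed value $a\oplus 3b$ takes the values $0,1,3,2$ for $(a,b)=(0,0),(1,0),(0,1),(1,1)$ respectively. The base case $V=E=\emptyset$ gives $g=0$, as required.

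\textbf{Options.} For the inductive step, we list the parities reachable in one move.
\begin{itemize}
\item Removing a hyperedge gives $(a,1-b)$.
\item Removing a vertex $v$ of degree $d(v)$ gives $(1-a,\,b\oplus (d(v)\bmod 2))$.
\end{itemize}
By induction, each option has value $a'\oplus 3b'$ for its parities $(a',b')$. The option value $a\oplus 3b$ itself can never occur, since every move changes at least one parity. It therefore suffices to show that every smaller value among $\{0,1,2,3\}$ is reachable whenever it should be.

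\textbf{Case analysis.}
\begin{itemize}
\item $(0,0)$: the target is $0$, and no option has value $0$, so $g=0$.
\item $(1,0)$: we need value $0$, i.e.\ parities $(0,0)$. Since $|V|$ is odd, $V\neq\emptyset$. A vertex of even degree gives $(0,0)$. If every vertex has odd degree, the handshake count gives $\sum_v d(v)=\sum_e |e|$. The left side is a sum of an odd number of odd terms, hence odd. The right side is a sum of $|E|$ odd terms, hence even because $b=0$. This contradiction shows an even-degree vertex exists.
\item $(0,1)$: we need values $0,1,2$. Removing a hyperedge gives $(0,0)$, value $0$. For values $1$ and $2$ we need parities $(1,0)$ and $(1,1)$, i.e.\ a vertex of odd degree and a vertex of even degree. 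Here $\sum_e|e|$ is odd (since $|E|$ is odd and each $|e|$ is odd), so some vertex has odd degree. An even-degree vertex need not exist, however. Is $3$ forced then? If value $2$ is missing, the mex would be $2$, not $3$. So this case needs closer checking.
\item $(1,1)$: we need $0,1$, i.e.\ parities $(0,0)$ via a vertex of odd degree, and $(1,0)$ via hyperedge removal, which gives value $1$. The degree sum is odd, so an odd-degree vertex exists.
\end{itemize}

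\textbf{Main obstacle.} The obstacle is the $(0,1)$ case when every vertex has odd degree. Then the degree sum is a sum of an even number of odd terms, hence even. But $\sum_e |e|$ is odd, since $|E|$ is odd and each $|e|$ is odd. This contradiction forces an even-degree vertex to exist. Value $2$ is therefore reachable, and the options have values exactly $\{0,1,2\}$, giving $g=3$.

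The whole proof is thus the induction plus the parity identity $\sum_v d(v)=\sum_e|e|\equiv |E|\pmod 2$, applied in each case.
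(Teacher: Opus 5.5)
Your proof is correct and complete. Three facts carry it:
\begin{itemize}
\item The odd-cardinality class is closed under moves, so strong induction on $|V|+|E|$ applies to every option.
\item Every move flips at least one of the parities of $|V|$ and $|E|$. Since $(a,b)\mapsto a\oplus 3b$ is a bijection onto $\{0,1,2,3\}$, the claimed value is never the value of an option.
\item The identity $\sum_v d(v)=\sum_e |e|\equiv |E| \pmod 2$ produces exactly the odd- or even-degree vertex each of the four parity cases needs to realize the smaller values.
\end{itemize}

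The paper gives no proof of this proposition; it cites the result from Fenner et al. Your argument is therefore a self-contained verification rather than a reconstruction of anything in the text.

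The one thing to fix is expository. In the $(0,1)$ bullet you assert that an even-degree vertex ``need not exist.'' That is false, and your ``Main obstacle'' paragraph immediately proves such a vertex always exists. Delete the false remark and fold that handshake argument directly into the $(0,1)$ case.
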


\subsection{\textsc{Avoid True} on positive CNF formulas}

The game \textsc{Avoid True} on positive CNF formulas is an impartial game played on a CNF formula
\begin{equation}
F = \bigwedge_{i \in [m]} \left( \bigvee_{x \in C_i} x \right),
\end{equation}
where $m \geq 1$ and each clause $C_i$ is a non-empty set of variables from $X = \{x_1, \dots, x_n\}$ and
contains no negative literals.
The game is played as follows.

\begin{ruleset}[\textsc{Avoid True} on positive CNF formulas]
A CNF formula over a variable set $X$ without negative literals is given.
Initially, all variables in $X$ are assigned \textsf{False}.
Two players take turns selecting a previously unchosen variable $x \in X$ and setting it to \textsf{True}, 
subject to the condition that setting $x$ to \textsf{True} does not satisfy the overall formula $F$.
In other words, at least one clause must remain unsatisfied after the move.
The first player who cannot make a move loses.
\end{ruleset}

Regarding the computational complexity of \textsc{Avoid True} on positive CNF formulas, the following result is known.
\begin{proposition}[\cite{Schaefer1978}]
\label{thm:avoid_true_pspace}
The problem of determining the outcome of \textsc{Avoid True} on positive CNF formulas is \textsf{PSPACE}-complete.
\end{proposition}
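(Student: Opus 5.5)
This result is quoted from \cite{Schaefer1978} rather than proved in the paper. If I had to reprove it, I would split it into membership and hardness.

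\emph{Membership in \textsf{PSPACE}.} Every move sets a fresh variable to \textsf{True}, so a play lasts at most $n$ moves. Legality of a move is checked in polynomial time: after the move, some clause $C_i$ must still be disjoint from the set $S$ of chosen variables. A depth-first evaluation of the game tree, storing only the current path (at most $n$ chosen variables), therefore decides the outcome in polynomial space.

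\emph{Hardness, reformulated.} A position is a set $S\subseteq X$ of chosen variables, and $S$ is legal iff $S\subseteq X\setminus C_i$ for some $i$. So the game is ``build a set lying inside one of the $m$ given sets $X\setminus C_i$''. Node Kayles is not a convenient source: its legal sets are the independent sets of a graph, and that family may have exponentially many maximal members, whereas here there are only $m$. I would therefore reduce from quantified Boolean formulas $\exists y_1\forall y_2\cdots Q y_k\,\varphi$ with $\varphi$ in CNF, following Schaefer's route through his formula games.

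\emph{Construction.} For each quantified variable $y_j$ I would introduce two game variables $t_j$ and $f_j$ (choosing one means ``$y_j$ takes this value''). I would add padding variables so that the $j$-th move is intended to fall in block $j$ and to be made by the player quantifying $y_j$. The clauses would be designed so that:
\begin{itemize}
\item[(a)] choosing both $t_j$ and $f_j$ kills every clause;
\item[(b)] choosing out of block order kills every clause, or hands the opponent an immediate winning reply;
\item[(c)] after all blocks are played, the number of remaining legal moves, whose parity decides the game, depends on whether $\varphi$ is satisfied.
\end{itemize}
For (c), for each clause $D$ of $\varphi$ I would create a positive clause consisting of the literal-variables that falsify $D$, plus a tail gadget. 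The aim is that the surviving ``all-\textsf{False}'' clause certifies a falsified $D$, and its tail gives the universal player the parity advantage.

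\emph{Main obstacle.} The hard step is enforcing (b): strictly positive clauses can only \emph{restrict} moves by removing whole clauses, so deviations must be punished by ending the game with the right parity rather than forbidding them outright. I would first try to use Schaefer's intermediate game $G_{\mathrm{pos}}(\mathrm{POS\ CNF})$ as the source, since it is already positive and partizan. I would then convert it to the impartial avoid-true game by giving each player's variables a disjoint private pool, and check by a strategy-stealing or pairing argument that neither player gains by playing in the other's pool.
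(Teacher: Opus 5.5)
The paper gives no proof of this proposition; it imports it from \cite{Schaefer1978}, which is the appropriate treatment here. Your membership argument is correct. The hardness half, however, is a plan rather than a proof, and one of its design goals cannot be met. Requirement (a) says that picking the second of $t_j,f_j$ satisfies $F$, whenever it is picked. So just after $t_j$ is chosen, every clause still alive must contain $f_j$, because the opponent could pick $f_j$ at once and that move must satisfy $F$; such a clause also avoids $t_j$. Hence a clause alive at the end of the block phase contains, for every $j$, exactly the literal-variable \emph{not} chosen in block $j$. It therefore survives for only one assignment of $y_1,\dots,y_k$. Every in-order play must end the block phase at a legal position, i.e.\ with some clause alive, so the $2^k$ assignments would need $2^k$ distinct clauses. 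Thus (a) and (c) cannot coexist in a polynomial-size formula. Double choices, like out-of-order moves, can only be punished through parity, and this puts the whole burden on (b).

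Requirement (b) is exactly the step you leave open. No clauses are specified for it. The passage from the partizan $G_{\mathrm{pos}}(\mathrm{POS\ CNF})$ to the impartial game via ``private pools'' has no mechanism, since in \textsc{Avoid True} either player may take any variable. It helps to see what such a gadget must achieve. If $S$ is terminal and $C$ is a clause with $C\cap S=\emptyset$, then every unchosen $x$ lies in $C$, so $S=X\setminus C$. The game is therefore purely a contest over which clause survives, and the first player wins iff $|X\setminus C|$ is odd. Every deterrent against poaching or out-of-order play must be encoded in which clauses the deviating move kills. That gadget, together with its correctness proof, is what is missing.

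Separately, your clause for $D$ is reversed. For its survival to certify that $D$ is false, it must consist of the literal-variables that make the literals of $D$ \emph{true}. The clause you describe survives, under one choice per block, only when every literal of $D$ is true.

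As written, the proposal establishes membership only. For hardness you should cite Schaefer, as the paper does, or actually carry out his reduction.
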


Henceforth, we refer to \textsc{Avoid True} on positive CNF formulas simply as \textsc{Avoid True}.

\section{Proof of Theorem \ref{thm:main}}
\label{sec:proof}

Since the plays of \textsc{Two-Level Posetgame} on $H = (V, E)$ terminate within $|V|+|E|$ moves and each position is represented in polynomial space in $|V|+|E|$, the problem clearly belongs to \textsf{PSPACE}.
Thus, we establish its \textsf{PSPACE}-hardness via a polynomial-time reduction from the problem of determining the outcome of \textsc{Avoid True} in Section~\ref{sec:construction}.
The proof of Lemma~\ref{lem:deviating}, which is used in this reduction, is deferred to Section~\ref{sec:proof_deviating}.

Although hypergraphs for \textsc{Two-Level Posetgame} may in general contain multiple identical hyperedges, 
our reduction does not require multiedges.
Thus, we assume that $E$ contains no duplicate hyperedges and treat $E$ as a subset of the power set of $V$.

\subsection{Proof of \textsf{PSPACE}-hardness}
\label{sec:construction}

Let $F = \bigwedge_{i \in [m]} \left( \bigvee_{x \in C_i} x \right)$ be an instance of \textsc{Avoid True} with variable set $X = \{x_1, \dots, x_n\}$.

We construct a hypergraph $H_{\emptyset} = (V(H_{\emptyset}), E(H_{\emptyset}))$ corresponding to $F$ as follows, where the subscript $\emptyset$ is chosen to align with the notation $H_S$ introduced later (i.e., corresponding to $S = \emptyset$).
\begin{itemize}
\item The vertex set is defined as
\begin{align}
V(H_{\emptyset}) \coloneqq X \sqcup \{u, a, b\} \sqcup \bigsqcup_{i\in[m]} \bigsqcup_{j \in \{0, 1\}} D_{i, j},
\end{align}
where for each clause $C_i$ and $j \in \{0, 1\}$, the set $D_{i, j}$ consists of auxiliary vertices with cardinality
\begin{equation}
|D_{i, j}| = \begin{cases}
1 & \text{if } |C_i| \text{ is odd}, \\
2 & \text{if } |C_i| \text{ is even}.
\end{cases}
\end{equation}
We define a subset of auxiliary vertices $M \subseteq V(H_{\emptyset})$ by
\begin{equation}
M \coloneqq \{a, b\} \cup \bigsqcup_{i\in[m]} \bigsqcup_{j \in \{0, 1\}} D_{i, j}.
\end{equation}

\item The hyperedge set is defined as
\begin{align}
E(H_{\emptyset}) \coloneqq \{e_M\} \sqcup \{ e_{i, j} \mid i \in [m], j \in \{0, 1\} \},
\end{align}
where $e_M \coloneqq M$, and $e_{i, j} \coloneqq C_i \cup \{u\} \cup D_{i, j}$ for each $i \in [m]$ and $j \in \{0, 1\}$.
\end{itemize}

\begin{figure}[htpb]
\centering
\includegraphics[keepaspectratio,scale=0.6]{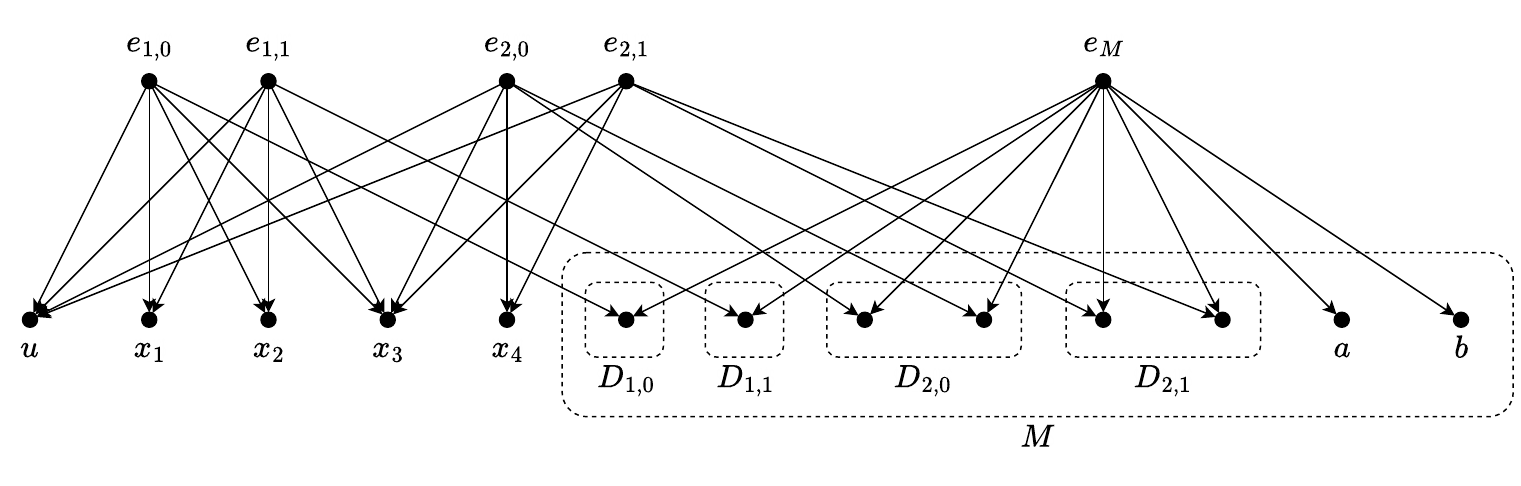}
\caption{The hypergraph $H_{\emptyset}$ corresponding to $m = 2$, $C_1 = \{x_1, x_2, x_3\}$ and $C_2 = \{x_3, x_4\}$}
\label{fig:two-level}
\end{figure}

Figure~\ref{fig:two-level} shows the hypergraph $H_{\emptyset}$ corresponding to the case $m = 2, C_1 = \{x_1, x_2, x_3\}, C_2 = \{x_3, x_4\}$; namely, $F = (x_1 \lor x_2 \lor x_3) \land (x_3 \lor x_4)$.

We refer to the hyperedges $e_{i, j}$ corresponding to clauses as the \emph{clause hyperedges}.
By construction, the edge $e_M$ has even cardinality, whereas every clause hyperedge $e_{i, j}$ has odd cardinality.

We identify a position of \textsc{Avoid True} with the set $S \subseteq X$ of variables currently assigned \textsf{True}.
A position $S \subseteq X$ of \textsc{Avoid True} reachable from the initial position $\emptyset$ via a sequence of moves is said to be \emph{legal}.
Let $\legalS$ denote the set of all legal positions of \textsc{Avoid True}; namely,
\begin{align}
\legalS \coloneqq \{S \subseteq X \mid \exists i \in [m] \text{ s.t. } C_i \cap S = \emptyset\}.
\end{align}
For $S \in \legalS$, let $\gat(S)$ denote the Grundy value of $S$ as a position of \textsc{Avoid True}; namely,
\begin{align}
\gat(S) = \mex \{ \gat(S \cup \{x\}) \mid S \cup \{x\} \in \legalS, x \in X \setminus S\}.
\end{align}

We identify a position of the \textsc{Two-Level Posetgame} with the remaining hypergraph $H$ in that position.
For a position $H$ of \textsc{Two-Level Posetgame},
let $\gpg(H)$ denote the Grundy value of $H$ as a position of \textsc{Two-Level Posetgame}; namely,
\begin{align}
\gpg(H) = \mex \{ \gpg(H') \mid H \to H'\}.
\end{align}
Let $\subH$ denote the set of all subpositions of $H_{\emptyset}$ (i.e., positions reachable from $H_{\emptyset}$ by a sequence of moves on \textsc{Two-Level Posetgame}).
For $S \in \legalS$, let $H_S$ denote the subposition of $H_{\emptyset}$ obtained by removing the vertices corresponding to $S$ from $H_{\emptyset}$;
namely, $H_S = (V(H_S), E(H_S)) \in \subH$ is defined as
\begin{align}
V(H_S) &\coloneqq V(H_{\emptyset}) \setminus S,\\
E(H_S) &\coloneqq E(H_{\emptyset}) \setminus \{e_{i, j} \mid C_i \cap S \neq \emptyset, j \in \{0, 1\}\}.
\end{align}
In particular, when $S = \emptyset$, the subposition $H_S$ coincides with $H_{\emptyset}$ itself, which is consistent with our subscript notation.
A subposition $H \in \subH$ is said to be \emph{legal} if $H = H_S$ for some $S \in \legalS$; otherwise, $H$ is said to be \emph{illegal}.
Let $\legalH$ denote the set of all legal positions of $\subH$.

The following lemma shows that from any legal position $H \in \legalH$, the set of Grundy values of the illegal positions reachable via a single move is precisely $\{0, 1, 2, 3\}$.

\begin{lemma}
\label{lem:deviating}
For every $H \in \legalH$, we have $\{\gpg(H') \mid H \to H', H' \not\in \legalH\} = \{0, 1, 2, 3\}$.
\end{lemma}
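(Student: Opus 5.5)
The plan is a case analysis over all single moves from a legal position $H_S$. For each move that yields an illegal position we compute $\gpg$, mostly with Proposition~\ref{prop:odd_formula}. Write $p \coloneqq (n-|S|) \bmod 2$ and $k \coloneqq |\{i \mid C_i \cap S = \emptyset\}| \ge 1$. Then $H_S$ has $n-|S|+3+|D|$ vertices, where $D \coloneqq \bigsqcup_{i,j} D_{i,j}$ satisfies $|D| = \sum_i 2|D_{i,0}|$, which is even. It also has $1+2k$ hyperedges: $e_M$ with $|M| = 2+|D|$ even, and $2k$ odd clause hyperedges. So $|V(H_S)| \equiv p+1 \pmod 2$.

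The moves from $H_S$ are of the following kinds.
\begin{enumerate}[(i)]
\item Remove a variable $x \notin S$. This gives $H_{S\cup\{x\}}$, which is legal when $S\cup\{x\}\in\legalS$; otherwise every remaining clause hyperedge disappears.
\item Remove $u$.
\item Remove $a$ or $b$.
\item Remove a vertex of some $D_{i,j}$.
\item Remove $e_M$.
\item Remove a clause hyperedge $e_{i,j}$.
\end{enumerate}
All moves except the legal ones in (i) give illegal positions, because an illegal position either lacks $u$, lacks a vertex of $M$, or lacks an edge whose clause is unsatisfied.

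Cases (iii), (iv) and (v) destroy $e_M$, so every remaining hyperedge is odd and Proposition~\ref{prop:odd_formula} applies directly.
\begin{itemize}
\item Removing $e_M$ leaves $|V|\equiv p+1$ and $2k$ edges, giving value $p\oplus 1$.
\item Removing $a$ or $b$ leaves $|V|\equiv p$ and $2k$ edges, giving value $p$.
\item Removing $d\in D_{i,j}$ also kills $e_{i,j}$, leaving $|V|\equiv p$ and $2k-1$ edges, giving value $p\oplus 3$.
\end{itemize}
In (ii), and in an illegal move of type (i), only $e_M$ survives, together with $r=n-|S|$ isolated vertices. These are the remaining variables, plus $u$ in the second case.

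For a single hyperedge of size $\ge 1$ plus $r$ isolated vertices, a quick induction on $r$ gives value $2\oplus(r\bmod 2)$. For $r=0$ the options are the two values $\{0,1\}$ of edgeless graphs, so the value is $2$; each isolated vertex then flips the low bit. Hence both of these moves give $2\oplus p$. Together with the previous cases, the values $p$, $p\oplus1$, $p\oplus 2$, $p\oplus 3$ are all attained, so $\{0,1,2,3\}$ is contained in the set.

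It remains to show that case (vi) produces nothing outside $\{0,1,2,3\}$, and I expect this to be the main obstacle, since $e_M$ survives and Proposition~\ref{prop:odd_formula} does not apply. I would prove an auxiliary formula for the class $\mathcal{C}$ of subpositions of $H_\emptyset$ that contain $e_M$ and in which every other edge is odd, namely
\[
  \gpg(H) = 2\oplus(|V(H)|\bmod 2)\oplus 3(|E_{\mathrm{odd}}(H)|\bmod 2).
\]
This agrees with the single-edge computation above and with the values already found. I would check it by induction over the moves. The reference value is $v\oplus 3o$, where $v=|V(H)|\bmod 2$ and $o=|E_{\mathrm{odd}}(H)|\bmod 2$.
\begin{itemize}
\item Removing $e_M$ gives $v\oplus 3o$ by Proposition~\ref{prop:odd_formula}.
\item Removing a vertex of $M$ gives $v\oplus1\oplus 3o'$. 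Here $o'$ depends on how many odd edges that vertex lies in, which is $0$ for $a,b$ and $1$ for vertices of $D$.
\item Removing an odd edge gives the target value $\oplus 3$.
\item Removing a vertex outside $M$ stays in $\mathcal{C}$ and flips the parities.
\end{itemize}
One then verifies that these options cover exactly the three values other than the target, and never the target itself. The delicate points are positions in which $a$, $b$ or the vertices of $D$ have already been removed, so that some option types are missing; I would handle these by restricting $\mathcal{C}$ to positions reachable from legal ones by one clause-edge removal followed by arbitrary moves. If the clean formula fails in some corner, the fallback is to show only that the value of the position after (vi) is at most $3$. For that it suffices to exhibit that none of the options of that position has the value the formula predicts, i.e.\ its options avoid some value in $\{0,1,2,3\}$, so the mex is at most $3$.
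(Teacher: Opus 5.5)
Your analysis of the moves that delete $e_M$, a vertex of $M$, $u$, or a variable that kills all clause edges is correct. It already gives the inclusion $\{0,1,2,3\}\subseteq\{\gpg(H')\mid H\to H',\,H'\notin\legalH\}$. There is one harmless slip: deleting $d\in D_{i,j}$ with $C_i\cap S\neq\emptyset$ leaves all $2k$ clause edges, so it gives $p$, not $p\oplus 3$.

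The genuine gap is case (vi), which you rightly flag as the crux. Your auxiliary formula $\gpg(H)=2\oplus v\oplus 3o$ on $\mathcal{C}$ is false whenever at least one clause edge is present. By your own option computation, deleting a vertex $d\in D_{i',j'}$ of a present clause edge gives $(v\oplus 1)\oplus 3(o\oplus 1)=2\oplus v\oplus 3o$. That is exactly the predicted value, so the verification step cannot succeed. Two concrete failures:
\begin{itemize}
\item The formula predicts $\gpg(H_S)=p\oplus 3$, although you have just shown that $H_S$ has options of every value $0,1,2,3$.
\item For $H'=H_S-e_{i,j}$ it predicts $p$, whereas deleting a vertex of $D_{i,1-j}$ from $H'$ gives a position of value $p$. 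The true value of $H'$ is $p\oplus 1$.
\end{itemize}

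Restricting $\mathcal{C}$ to positions reachable after one clause-edge deletion does not help. For $k\ge 2$ that class contains positions with $e_M$, $u$ and an even positive number of clause edges. In such a position the four moves (delete $e_M$, delete $a$, delete a $D$-vertex of a present clause edge, delete $u$) already realize $\{0,1,2,3\}$. So their values are at least $4$, and no parity formula with values in $\{0,1,2,3\}$ can describe them.

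Your fallback has the right shape, but it targets a value that is actually attained. More importantly, showing that the options of $H'$ avoid the correct value $p\oplus 1$ requires the values of $H'-e_{i',j'}$ and $H'-x$. These are again positions containing $e_M$ and $u$ with an arbitrary number of clause edges.

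The missing ingredient is an induction over all positions containing $e_M$ and $u$, for example on the number of remaining variables plus clause edges. Writing $v=|V(H)|\bmod 2$, it has three cases:
\begin{itemize}
\item value $2\oplus v$ if no clause edge remains;
\item value at least $4$ if the number of clause edges is even and positive;
\item value $v$ if the number of clause edges is odd.
\end{itemize}
The ``$\ge 4$'' case is indispensable for excluding the value $v$ among options obtained by deleting a clause edge or a variable. With this induction in hand, $H'$ has value $p\oplus 1\in\{0,1\}$ and case (vi) is settled.
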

The proof of Lemma~\ref{lem:deviating} is deferred to Section~\ref{sec:proof_deviating}.

Using Lemma~\ref{lem:deviating}, we establish the connection between the Grundy values $\gat(S)$ and $\gpg(H_S)$ as follows.

\begin{lemma}
\label{lem:h_to_g}
For every $S \in \legalS$, we have $\gpg(H_S) = 4 + \gat(S)$,
where $+$ denotes the standard addition.
\end{lemma}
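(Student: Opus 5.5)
We prove the claim by induction on $|X \setminus S|$, or equivalently by reverse induction on $|S|$ over $\legalS$. The inductive hypothesis is that $\gpg(H_{S'}) = 4 + \gat(S')$ for every $S' \in \legalS$ with $S' \supsetneq S$. The moves from $H_S$ split into two classes: moves to legal positions $H' \in \legalH$ and moves to illegal positions. Lemma~\ref{lem:deviating} gives the Grundy values of the second class exactly as $\{0,1,2,3\}$. It therefore suffices to show that the Grundy values of legal successors of $H_S$ are exactly $\{4 + \gat(S') \mid S \to S' \text{ in \textsc{Avoid True}}\}$. Then
\[
\gpg(H_S) = \mex\bigl(\{0,1,2,3\} \cup \{4+\gat(S') \mid S \to S'\}\bigr) = 4 + \mex\{\gat(S') \mid S\to S'\} = 4 + \gat(S).
\]
The step $\mex(\{0,1,2,3\}\cup(4+A)) = 4+\mex(A)$ holds for any finite $A\subseteq\uint$ and is immediate.

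The main work is the combinatorial identification of the legal successors of $H_S$.

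First, every move of the form ``remove $x \in X\setminus S$'' with $S\cup\{x\}\in\legalS$ produces exactly $H_{S\cup\{x\}}$. Removing the vertex $x$ deletes precisely the clause hyperedges $e_{i,j}$ with $x\in C_i$, and it leaves the vertex set and all other hyperedges as in the definition of $H_{S\cup\{x\}}$.

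Conversely, suppose a single move from $H_S$ yields some $H_{S'}$ with $S'\in\legalS$. We must show that $S' = S\cup\{x\}$ for a legal move $x$. The vertex sets $V(H_S)$ and $V(H_{S'})$ differ only inside $X$. A single move removes at most one vertex, so either $S'=S$ or $S'=S\cup\{x\}$.
\begin{itemize}
\item If $S'=S$, no position changes, which is impossible: every move removes something, and $H_{S'} = H_S$ would require removing nothing.
\item If the move removes a hyperedge, the vertex set is unchanged, forcing $S'=S$. But the resulting edge set is strictly smaller than $E(H_S)$, so it cannot equal $H_S$.
\end{itemize}
Hence every legal successor arises from removing a vertex $x\in X\setminus S$ with $S\cup\{x\}\in\legalS$. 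The same vertex-set argument rules out removing $u$ or a vertex of $M$. Removing $x$ with $S\cup\{x\}\notin\legalS$ gives a position that is not $H_{S'}$ for any legal $S'$, so it is not legal.

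This care matters because ``legal'' is defined as equality with some $H_{S'}$, so we must rule out coincidences. The argument above does so via vertex sets alone, so I expect no real obstacle here. The one point to double-check is that different legal $S'$ give different $H_{S'}$, which holds because $S' = X \setminus V(H_{S'})$. The legal successors are therefore exactly $\{H_{S\cup\{x\}}\}$, and the inductive hypothesis gives their values $4+\gat(S\cup\{x\})$.

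The base case is a legal $S$ with no legal \textsc{Avoid True} moves. There $\gat(S)=0$, and by the characterization above all moves from $H_S$ lead to illegal positions. Lemma~\ref{lem:deviating} then gives $\gpg(H_S)=\mex\{0,1,2,3\}=4$, so the base case is covered by the same computation. All the genuine difficulty lies in Lemma~\ref{lem:deviating}, which we may assume.
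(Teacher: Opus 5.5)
Your proposal is correct and follows the paper's proof. Both use reverse induction on $|S|$, split the successors of $H_S$ into illegal ones (handled by Lemma~\ref{lem:deviating}) and legal ones (exactly the $H_{S\cup\{x\}}$ with $S\cup\{x\}\in\legalS$), and then shift the mex by $4$. Your vertex-set argument, which shows that these are the only legal successors, supplies a step the paper asserts without comment.
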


\begin{proof}
We prove this statement by induction on $|S|$ in decreasing order. We have
\begin{align}
\gpg(H_S)
&= \mex\{\gpg(H') \mid H_S \to H'\}\\
&= \mex\left(\{\gpg(H') \mid H_S \to H', H' \not\in \legalH\} \cup \{\gpg(H') \mid H_S \to H', H' \in \legalH\} \right)\\
&\eqlab{A}= \mex\left(\{0, 1, 2, 3\}  \cup \{\gpg(H') \mid H_S \to H', H' \in \legalH\} \right)\\
&= \mex\left( \{0, 1, 2, 3\} \cup \{\gpg(H_{S \cup \{x\}}) \mid S \cup \{x\} \in \legalS, x \in X \setminus S\} \right)\\
&\eqlab{B}= \mex\left( \{0, 1, 2, 3\} \cup \{4 + \gat(S \cup \{x\}) \mid S \cup \{x\} \in \legalS, x \in X \setminus S\} \right)\\
&= 4 + \mex\{\gat(S \cup \{x\}) \mid S \cup \{x\} \in \legalS, x \in X \setminus S\}\\
&= 4 + \gat(S),
\end{align}
where
(A) follows from Lemma~\ref{lem:deviating}, and
(B) follows from the induction hypothesis.
\end{proof}

Let $Q = (V(Q), E(Q))$ be a hypergraph defined by $V(Q) = \{r_1, r_2, r_3\}$ and $E(Q) = \{e_1, e_2, e_3\}$ with $e_1 = \{r_1, r_2\}$, $e_2 = \{r_2, r_3\}$, and $e_3 = \{r_3\}$.
Direct evaluation confirms that $\gpg(Q) = 4$.

Consider the disjunctive sum of $H_{\emptyset}$ and $Q$;
namely, the \textsc{Two-Level Posetgame} played on $P_F = (V(P_F), E(P_F))$ defined as
$V(P_F) \coloneqq V(H_{\emptyset}) \sqcup V(Q)$ and $E(P_F) \coloneqq E(H_{\emptyset}) \sqcup E(Q)$.
Then, the outcome of \textsc{Two-Level Posetgame} on $P_F$ coincides with that of \textsc{Avoid True} on $F$ as follows.

\begin{lemma}
\label{thm:reduction_correctness}
The outcome of \textsc{Avoid True} on $F$ is equal to the outcome of \textsc{Two-Level Posetgame} on $P_F$.
\end{lemma}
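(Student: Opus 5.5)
The plan is to combine Lemma~\ref{lem:h_to_g} with the Sprague--Grundy sum rule. Since $P_F$ is the disjunctive sum of $H_{\emptyset}$ and $Q$, we have
\begin{equation}
\gpg(P_F) = \gpg(H_{\emptyset}) \oplus \gpg(Q).
\end{equation}
First, I will evaluate $\gpg(Q)$ to confirm the claimed value $4$. Recall $e_1 = \{r_1,r_2\}$, $e_2=\{r_2,r_3\}$, $e_3=\{r_3\}$. The moves from $Q$ and their values are as follows.
\begin{itemize}
\item Removing $r_1$ leaves $\{r_2,r_3\}$ with edges $e_2,e_3$. Removing $r_2$ leaves $\{r_1,r_3\}$ with $e_3$. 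Removing $r_3$ leaves $\{r_1,r_2\}$ with $e_1$.
\item Removing an edge leaves all three vertices with the other two edges.
\end{itemize}
Each of these small positions is evaluated by direct mex computation, bottom-up. Proposition~\ref{prop:odd_formula} applies only where all remaining edges are odd, such as $\{r_1,r_3\}$ with $e_3$. The other positions have few enough elements to compute by hand. The option values should come out as $\{0,1,2,3\}$, which gives $\gpg(Q)=4$. This finite check is routine, so I only verify it rather than present it at length.

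Next, Lemma~\ref{lem:h_to_g} with $S=\emptyset$ gives $\gpg(H_{\emptyset}) = 4 + \gat(\emptyset)$. This requires $\emptyset \in \legalS$, which holds because $m \ge 1$ and every clause $C_i$ is disjoint from $\emptyset$. Therefore
\begin{equation}
\gpg(P_F) = (4+\gat(\emptyset)) \oplus 4 .
\end{equation}
This vanishes if and only if $4+\gat(\emptyset)=4$, that is, if and only if $\gat(\emptyset)=0$, because $a \oplus b = 0$ exactly when $a = b$.

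Finally, I translate this into outcomes. $P_F$ is a $\PP$-position if and only if $\gpg(P_F)=0$, which holds if and only if $\gat(\emptyset)=0$. The last condition holds if and only if the initial position of \textsc{Avoid True} on $F$ is a $\PP$-position. In both games the first player moves first from the initial position, so the outcomes coincide.

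The main obstacle is not here: all the substantive work sits in Lemma~\ref{lem:deviating} and Lemma~\ref{lem:h_to_g}, which are assumed. The only care needed is the value $\gpg(Q)=4$, whose purpose is to cancel the offset $4$ under nim-sum. If my hand computation disagreed, I would recheck the options of $Q$ one by one rather than change the approach.
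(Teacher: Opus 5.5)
Your proposal is correct and takes essentially the same route as the paper: the nim-sum rule for $P_F = H_{\emptyset} + Q$, the value $\gpg(Q)=4$, and Lemma~\ref{lem:h_to_g} at $S=\emptyset$ together give $\gpg(P_F)=0$ if and only if $\gat(\emptyset)=0$. The check of $Q$ you left unwritten also works out: removing $r_1, r_2, r_3, e_1, e_2, e_3$ gives options with Grundy values $0, 3, 2, 1, 0, 1$ respectively, so $\gpg(Q)=\mex\{0,1,2,3\}=4$.
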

\begin{proof}
We have
\begin{align}
\gpg(P_F) = 0
&\iff \gpg(H_{\emptyset}) \oplus \gpg(Q) = 0\\
&\iff \gpg(H_{\emptyset}) = 4\\
&\eqlab{A}\iff \gat(\emptyset)+4 = 4\\
&\iff \gat(\emptyset) = 0,
\end{align}
where (A) follows from Lemma \ref{lem:h_to_g}.
\end{proof}

Therefore, determining the outcome of \textsc{Avoid True} on $F$ is reduced to determining the outcome of \textsc{Two-Level Posetgame} on $P_F$.
The number of vertices and hyperedges of $P_F$ is bounded above by a polynomial in the number of variables and clauses,
so that $P_F$ can be constructed in polynomial time.
This completes the proof of Theorem \ref{thm:main}.

\subsection{Proof of Lemma~\ref{lem:deviating}}\label{sec:proof_deviating}

To establish Lemma~\ref{lem:deviating}, we analyze a slightly broader class of positions than legal positions, referred to as \emph{good} positions below.

\begin{definition}
A position $H = (V(H), E(H)) \in \subH$ is said to be \emph{good} if $e_M \in E(H)$ and $u \in V(H)$.
Let $\goodH$ denote the set of all good positions.
For $H \in \goodH$, let $k_H$ denote the number of remaining clause hyperedges of $H$,
and define $U_H \coloneqq (V(H) \cap X) \cup \{u\}$ and $p_H \coloneqq |U_H| \bmod 2$.
\end{definition}

By definition, it holds that $\legalH \subseteq \goodH$.
Note that the condition $e_M \in E(H)$ implies that all vertices in $M$ remain in $H$.

\begin{lemma}
\label{lem:four-moves}
For any $H \in \goodH$, if $k_H > 0$, then the following four types of moves are available from $H$, and the Grundy values of the resulting positions $H'$ are given in Table \ref{tab:transitions}.
\begin{itemize}
\item Removing $e_M$.
\item Removing $a \in M$ (or $b \in M$).
\item Removing some $d \in D_{i, j}$ satisfying $e_{i, j} \in E(H)$ (i.e., the clause hyperedge $e_{i, j}$ remains in $H$).
\item Removing $u$.
\end{itemize}
\end{lemma}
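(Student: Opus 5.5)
The plan is to show that each of the four moves leaves a hypergraph whose Grundy value follows from Proposition~\ref{prop:odd_formula} or from a tiny direct computation. Fix $H \in \goodH$ with $k_H > 0$. First I check that the four moves are available.

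\begin{itemize}
\item $e_M \in E(H)$ because $H$ is good.
\item Every vertex of $M$, in particular $a$ and $b$, is present, since $e_M \in E(H)$.
\item $u \in V(H)$ because $H$ is good.
\item Since $k_H>0$, some clause hyperedge $e_{i,j}$ remains. Then every vertex of $D_{i,j}$ is present, because $D_{i,j} \subseteq e_{i,j}$ and a hyperedge survives only while all its vertices survive.
\end{itemize}

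Next I record two parity facts used throughout. First, $|M|$ is even: $|M| = 2 + \sum_{i}(|D_{i,0}|+|D_{i,1}|)$, and each $|D_{i,j}| \in \{1,2\}$ is counted in pairs. Second, the only vertices that can have been removed in a good position are variables in $X$, so $V(H) = U_H \sqcup M$. Hence $|V(H)| \equiv p_H \pmod 2$.

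The key structural observation is which hyperedges each move deletes.
\begin{itemize}
\item Removing $e_M$, $a$, or $b$ deletes $e_M$ and no clause hyperedge, since $a,b$ lie only in $e_M$.
\item Removing $d \in D_{i,j}$ deletes $e_M$ and exactly one clause hyperedge, $e_{i,j}$, since $d$ lies in no other hyperedge.
\item Removing $u$ deletes every clause hyperedge and keeps $e_M$.
\end{itemize}

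In the first three cases, every remaining hyperedge is a clause hyperedge, and these have odd cardinality. So Proposition~\ref{prop:odd_formula} applies directly:
\begin{itemize}
\item removing $e_M$ gives $p_H \oplus 3(k_H \bmod 2)$;
\item removing $a$ or $b$ gives $(1 \oplus p_H) \oplus 3(k_H \bmod 2)$;
\item removing $d$ gives $(1\oplus p_H) \oplus 3((k_H-1) \bmod 2)$.
\end{itemize}
These are the entries I expect in Table~\ref{tab:transitions}.

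For the move removing $u$, the result is the single even hyperedge $e_M$ on $M$, together with $|V(H)\cap X| = |U_H|-1$ isolated vertices. I will treat it as a disjunctive sum.
\begin{itemize}
\item Each isolated vertex is a Nim-heap of size $1$.
\item Call $(M,\{e_M\})$ the $e_M$-component. Removing its edge leaves $|M|$ isolated vertices, with value $0$. Removing one of its vertices leaves $|M|-1$ isolated vertices, with value $1$. So the component has value $\mex\{0,1\}=2$.
\end{itemize}
Hence this move gives $2 \oplus (1\oplus p_H)$, which lies in $\{2,3\}$.

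The only step needing care is the claim $V(H)=U_H\sqcup M$, i.e.\ that no vertex of $M$ or $u$ is missing in a good position. This follows from $e_M\in E(H)$ and $u\in V(H)$. Beyond that, the main risk is bookkeeping of the parities against the exact layout of Table~\ref{tab:transitions}. I would organize the table by the two bits $p_H$ and $k_H \bmod 2$, and verify the four cases row by row.
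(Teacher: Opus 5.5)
Your proposal is correct and follows essentially the same route as the paper. Like the paper, you check availability from goodness and $k_H>0$, apply Proposition~\ref{prop:odd_formula} to the three moves that delete $e_M$ (using $|V(H)| \equiv p_H \pmod 2$), and compute the $u$-move directly as $2 \oplus ((|U_H|-1) \bmod 2) = p_H \oplus 3$. Your explicit justifications that $|M|$ is even, that $V(H) = U_H \sqcup M$, and that the $e_M$-component has value $\mex\{0,1\} = 2$ spell out steps the paper leaves implicit.
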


\begin{table}[ht]
\centering
\caption{Grundy values after the four specific types of moves, where $p \coloneqq p_H$.}
\label{tab:transitions}
\begin{tabular}{c|c|c}
\toprule
Chosen Element  & Resulting $\gpg(H')$ & Resulting $\gpg(H')$\\
& (if $k_H$ is positive and even) &  (if $k_H$ is odd) \\
\midrule
$e_M$ & $p$ & $p \oplus 3$ \\
$a$ or $b$ & $p \oplus 1$ & $p \oplus 2$ \\
$d \in D_{i, j}$ satisfying $e_{i, j} \in E(H)$ & $p \oplus 2$ & $p \oplus 1$ \\
$u$ & $p \oplus 3$ & $p \oplus 3$ \\
\bottomrule
\end{tabular}
\end{table}

\begin{proof}
For simplicity, let $k \coloneqq k_H, U \coloneqq U_H, p \coloneqq p_H$ by omitting the subscript $H$.
 These four moves are available because the assumption ensures $\{a, b, u\} \cup \left( \bigcup_{i=1}^m\bigcup_{j \in \{0, 1\}}D_{i, j} \right) \subseteq  M \cup \{u\} \subseteq V(H)$ and $e_M \in E(H)$, and the condition $k > 0$ ensures that there exists $d \in D_{i, j}$ with its clause hyperedge $e_{i, j}$ remaining in $H$.
  
We analyze the Grundy values of the positions $H'$ resulting from each of the four types of moves as follows.
\begin{itemize}
\item Removing $e_M$: The hypergraph $H'$ consists of $k$ clause hyperedges with odd cardinality and $|M| + |U|$ vertices.
By Proposition~\ref{prop:odd_formula}, we have
\begin{align}
\gpg(H') = ((|M|+|U|) \bmod 2) \oplus 3(k \bmod2) = p \oplus 3(k \bmod2).
\end{align}

\item Removing $a \in M$ (or $b \in M$):
This automatically removes $e_M$.
The hypergraph $H'$ consists of $k$ clause hyperedges with odd cardinality and $|M \setminus \{a\}| + |U|$ vertices.
By Proposition~\ref{prop:odd_formula}, we have
\begin{align}
\gpg(H') = ((|M \setminus \{a\}| + |U|) \bmod 2) \oplus 3(k \bmod 2) =  (p \oplus 1) \oplus 3(k \bmod 2).
\end{align}
   
\item Removing $d \in D_{i, j}$ satisfying $e_{i, j} \in E(H)$: 
This removes both $e_M$ and $e_{i, j}$.
The hypergraph $H'$ consists of $k - 1$ clause hyperedges with odd cardinality and $|M \setminus \{d\}| + |U|$ vertices. 
By Proposition~\ref{prop:odd_formula}, we have
\begin{align}
 \gpg(H') = ((|M \setminus \{d\}| + |U|) \bmod 2) \oplus 3((k - 1) \bmod 2) = (p \oplus 1) \oplus 3((k-1) \bmod 2).
\end{align}
   
\item Removing $u$: This removes all $k$ clause hyperedges.
The hypergraph $H'$ consists of the single hyperedge $e_M$, its vertices $M$, and $|U|-1$ isolated vertices.
It is directly seen that the single non-empty hyperedge $e_M$ together with its vertices $M$ has Grundy value $2$,
and adding $|U|-1$ isolated vertices changes the Grundy value to $2 \oplus ((|U|-1) \bmod 2) = p \oplus 3$, so that
\begin{align}
\gpg(H') = 2 \oplus (p \oplus 1) = p \oplus 3.
\end{align}
\end{itemize}

Table \ref{tab:transitions} is obtained by applying the above analysis to the case where $k$ is even and the case where $k$ is odd.
\end{proof}

The following lemma evaluates the Grundy values for the good positions.

\begin{lemma}
\label{lem:main_eval}
For any $H \in \goodH$, we have
\begin{align}
\gpg(H)
\begin{cases}
= 2 \oplus p_H &\text{if } k_H = 0,\\
\ge 4 &\text{if } k_H \text{ is even and } k_H \neq 0,\\
= p_H  &\text{if } k_H \text{ is odd.}
\end{cases}
\label{eq:dinrdj0sjsse}
\end{align}
\end{lemma}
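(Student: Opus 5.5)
Since good positions are closed under removing a clause hyperedge or an $X$-vertex, but a good position can also move to a non-good one, I would argue by induction on the number of elements $|V(H)|+|E(H)|$. For each good $H$ I would classify every move. The moves listed in Lemma~\ref{lem:four-moves} can destroy goodness: removing $e_M$, $a$, $b$, a vertex $d\in D_{i,j}$, or $u$. Their values are given by Table~\ref{tab:transitions}. The case $k_H=0$ is handled separately: then $H$ is $e_M$ on $M$ plus the isolated vertices $U_H$, and the computation in the proof of Lemma~\ref{lem:four-moves} (a single hyperedge of even size has value $2$, plus $|U_H|$ isolated vertices) gives $\gpg(H)=2\oplus p_H$ directly. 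Two further move types are relevant. Removing a vertex $d\in D_{i,j}$ whose clause hyperedge is already gone destroys $e_M$ without changing $k$; by Proposition~\ref{prop:odd_formula} it has the same value as removing $a$. The remaining moves keep $H$ good: removing a clause hyperedge (so $k$ drops by one and $p$ is unchanged), or removing a vertex $x\in X\cap V(H)$ (so $p$ flips and $k$ drops by some number $t\ge 0$). For these I use the induction hypothesis.

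\textbf{Case $k_H$ odd.} I must show that $\gpg(H)=p$, i.e.\ that $p$ is not an option value and every value below $p$ is. The non-good options have values $p\oplus3$, $p\oplus2$, $p\oplus1$, $p\oplus3$, none equal to $p$, and these cover $\{0,1,2,3\}\setminus\{p\}$. So all values less than $p\le 1$ are attained, and it remains to exclude $p$ among the good options. Removing a clause hyperedge gives even $k'=k-1$ with the same $p$. If $k'>0$ the value is $\ge4$; if $k'=0$ the value is $2\oplus p\neq p$. Removing $x$ gives parity $p\oplus1$ and some $k'$. If $k'$ is odd the value is $p\oplus1\ne p$; if $k'$ is even and nonzero the value is $\ge4$; if $k'=0$ the value is $2\oplus p\oplus1=p\oplus3\neq p$. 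Hence $\gpg(H)=p$.

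\textbf{Case $k_H$ even and positive.} I must show that $0,1,2,3$ are all option values. From the table, the four moves give $p$, $p\oplus1$, $p\oplus2$, $p\oplus3$, which is exactly $\{0,1,2,3\}$. So $\gpg(H)\ge4$.

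I expect the main obstacle to be bookkeeping rather than depth. One point is that the induction must cover all good positions, including those with $k=0$ reached after an $x$-removal. Another is checking that no good option in the odd case accidentally yields $p$, which relies on the $k'=0$ value $2\oplus p'$ never coinciding with $p$. That identity is exactly why $e_M$ was given even cardinality. A final subtlety is the extra move of removing some $d$ whose clause hyperedge has already been removed; it must land in $\{0,1,2,3\}\setminus\{p\}$ in the odd case, and indeed it equals the $a$-move value $p\oplus2$.
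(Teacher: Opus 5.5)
Your proposal is correct and follows essentially the same route as the paper. Both use the same three-case analysis: Table~\ref{tab:transitions} for the moves that destroy goodness (plus the extra $d$-move, which matches the $a$-move), and the induction hypothesis for removing a clause hyperedge or an $X$-vertex. The only difference is that you induct on $|V(H)|+|E(H)|$ rather than the paper's $|U_H|+k_H$, and both measures decrease along the moves where the hypothesis is invoked.

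One small correction to your closing aside. A lone nonempty hyperedge has value $2$ whatever its size, so $2\oplus p'\neq p$ does not depend on $|e_M|$ being even. What the evenness of $|M|$ actually buys is twofold: the entries of Table~\ref{tab:transitions} come out as stated, and $H$ stays outside the scope of Proposition~\ref{prop:odd_formula}.
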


\begin{proof}
Let $k \coloneqq k_H$, $U \coloneqq U_H$, and $p \coloneqq p_H$ for simplicity.
We proceed by induction on $|U| + k$. 

\begin{itemize}
\item The case $k = 0$:
The hypergraph $H$ consists of the single hyperedge $e_M$, its vertices $M$, and $|U|$ isolated vertices.
The single non-empty hyperedge $e_M$ together with its vertices $M$ has Grundy value $2$.
Adding $|U|$ isolated vertices changes the Grundy value to $2 \oplus (|U| \bmod 2) = 2 \oplus p$. 
Thus, $\gpg(H) = 2 \oplus p$.

\item The case where $k$ is positive and even:
  By Lemma \ref{lem:four-moves}, the four types of moves in Table~\ref{tab:transitions} are available and lead to positions with Grundy values $p, p \oplus 1, p \oplus 2, p \oplus 3$.
  Since $p \in \{0, 1\}$, we have $\{p, p \oplus 1, p \oplus 2, p \oplus 3\} = \{0, 1, 2, 3\}$, which implies $\gpg(H) \ge 4$.

\item The case where $k$ is odd:
  By Lemma \ref{lem:four-moves}, the four types of moves in Table~\ref{tab:transitions} are available and yield Grundy values $p \oplus 1, p \oplus 2, p \oplus 3$, none of which equals $p$.
  Therefore, to prove $\gpg(H) = \mex\{p \oplus 1, p \oplus 2, p \oplus 3, \ldots\} = p$, it suffices to verify that no other move leads to a position with Grundy value $p$.
  We analyze the remaining types of moves as follows, where $H'$ denotes the position resulting from the move on $H$.
  \begin{itemize}
	  \item Removing $e_{i, j}$: This leaves an even number $k - 1$ of clause hyperedges.
	  Since $H'$ remains good and $|U_{H'}| + k_{H'} = |U| + (k-1) < |U| + k$, we may apply the induction hypothesis to $H'$ to see that
	  the Grundy value $\gpg(H')$ does not equal $p$ as follows.
	  \begin{itemize}
		  \item If $k-1 > 0$, then $\gpg(H') \ge 4 > p$ by the second case of \eqref{eq:dinrdj0sjsse}.
		  \item If $k-1 = 0$, then $\gpg(H') = 2 \oplus p \neq p$ by the first case of \eqref{eq:dinrdj0sjsse}.
	  \end{itemize}
  	\item Removing $x \in U \setminus \{u\}$: This flips the parity $p$ to $p_{H'} = |U \setminus \{x\}| \bmod 2 =  p \oplus 1$.
  	 Since $H'$ remains good and $|U_{H'}| + k_{H'} = (|U|-1) + k_{H'} \leq (|U|-1) + k < |U| + k$, we may apply the induction hypothesis to $H'$ to see that the Grundy value $\gpg(H')$ is given depending on $k_{H'}$ as follows.
  	  \begin{itemize}
  	  \item If $k_{H'} = 0$, then $\gpg(H') = 2 \oplus p_{H'} = 2 \oplus (p \oplus 1) = p \oplus 3$ by the first case of \eqref{eq:dinrdj0sjsse}.
  	  \item If $k_{H'}$ is positive and even, then $\gpg(H') \ge 4$ by the second case of \eqref{eq:dinrdj0sjsse}.
  	  \item If $k_{H'}$ is odd, then $\gpg(H') = p_{H'} = p \oplus 1$ by the third case of \eqref{eq:dinrdj0sjsse}.
  	  \end{itemize}
      None of these equals $p$.
  	\item Removing $d \in D_{i, j}$ satisfying $e_{i, j} \not\in E(H)$ (i.e., the clause hyperedge $e_{i, j}$ was already removed): This behaves identically to removing $a$, yielding Grundy value $p \oplus 2 \neq p$ by Table~\ref{tab:transitions}.
  \end{itemize}
  Consequently, no move leads to a position with Grundy value $p$, while transitions to $p \oplus 1, p \oplus 2, p \oplus 3$ are always available. Thus, $\gpg(H) = \mex \{ p \oplus 1, p \oplus 2, p \oplus 3, \dots \} = p$.
\end{itemize}
\end{proof}

We are now ready to complete the proof of Lemma~\ref{lem:deviating}.

\begin{proof}[Proof of Lemma~\ref{lem:deviating}]
In a legal position $H \coloneqq H_S$, at least one clause $C_i$ is unsatisfied, and both hyperedges $e_{i,0}$ and $e_{i,1}$ associated with every unsatisfied clause remain, so that $k_H$ is positive and even.
To prove Lemma~\ref{lem:deviating}, we examine transitions from $H = H_S \in \legalH$ to illegal positions $H' \notin \legalH$ by categorizing all possible moves as follows.

\begin{itemize}
\item Removing $e_M$, $a$, $b$, $u$, or $d \in D_{i, j}$ satisfying $e_{i, j} \in E(H)$ (i.e., the four types of moves in Table~\ref{tab:transitions}):
As stated above, $k_H$ is a positive even integer. 
By Lemma \ref{lem:four-moves}, these moves transition to positions with Grundy values in $\{p_H, p_H \oplus 1, p_H \oplus 2, p_H \oplus 3\} = \{0, 1, 2, 3\}$.

\item Removing  $d \in D_{i, j}$ satisfying $e_{i, j} \not\in E(H)$:
This behaves identically to removing $a$ or $b$, yielding Grundy value $p_H \oplus 1$ by Table~\ref{tab:transitions}.

\item Removing a single clause hyperedge $e_{i, j}$: This leaves an odd number $k_H - 1$ of clause hyperedges, yielding Grundy value $p_{H'} \in \{0, 1\}$ by the third case of \eqref{eq:dinrdj0sjsse}.

\item Removing $x \in X \setminus S$ that removes all remaining clause hyperedges: This leaves $k_{H'} = 0$ clause hyperedges, yielding a Grundy value $2 \oplus p_{H'} \in \{2, 3\}$ by the first case of \eqref{eq:dinrdj0sjsse}.
\end{itemize}

Therefore, every move to an illegal position yields a Grundy value in $\{0, 1, 2, 3\}$, establishing $\{\gpg(H') \mid H \to H', H' \not\in \legalH\} \subseteq \{0, 1, 2, 3\}$.
Conversely, Lemma \ref{lem:four-moves} guarantees that by selecting $e_M$, $a$, $b$, $u$, or $d \in D_{i, j}$ with $e_{i, j} \in E(H)$,
one can transition to positions attaining every Grundy value in $\{0, 1, 2, 3\}$.
Thus, $\{\gpg(H') \mid H \to H', H' \not\in \legalH\} \supseteq \{0, 1, 2, 3\}$ also holds.
This completes the proof of Lemma~\ref{lem:deviating}.
\end{proof}

\section*{Acknowledgements}
This work was supported in part by JSPS KAKENHI Grant Number JP26K21166.
The proof of Theorem \ref{thm:main} was obtained during a session with GPT-6 Astra.

\bibliographystyle{plain}
\bibliography{draft}

@article{Sprague1935,
	author = {Sprague, Roland P.},
	title = {{\"U}ber mathematische {Kampfspiele}},
	journal = {Tohoku Mathematical Journal},
	volume = {41}, 
	pages = {438--444},
	year = {1935}
}

@article{Grundy1939,
	author = {Grundy, P. M.},
	title = {Mathematics and games},
	journal = {Eureka},
	volume = {2},
	pages = {6--8},
	year = {1939}
}

@article{Schaefer1978,
	author = {Schaefer, Thomas J.},
	title = {On the complexity of some two-person perfect-information games},
	journal = {Journal of Computer and System Sciences},
	volume = {16},
	number = {2},
	pages = {185--225},
	year = {1978}
}

@inproceedings{Grier2013,
	author = {Grier, Daniel},
	title = {Deciding the Winner of an Arbitrary Finite Poset Game Is {PSPACE}-Complete},
	booktitle = {Proceedings of the 40th International Colloquium on Automata, Languages, and Programming (ICALP 2013)},
	series = {Lecture Notes in Computer Science},
	volume = {7965},
	pages = {497--503},
	publisher = {Springer},
	year = {2013}
}

@article{FGG22,
	title={The Complexity of Poset Games},
	volume={26},
	number={1},
	journal={Journal of Graph Algorithms and Applications},
	author={Fenner, Stephen and Grier, Daniel and Gurjar, Rohit and Korwar, Arpita and Thierauf, Thomas},
	year={2022},
	month={Jan.},
	pages={1--14}
}

@article{Fenner2013,
	author = {Fenner, Stephen A. and Gurjar, Rohit and Korwar, Arpita and Thierauf, Thomas},
	title = {On two-level poset games},
	journal = {Electronic Colloquium on Computational Complexity (ECCC)},
	volume = {20},
	pages = {TR13\-019},
	year = {2013}
}

@article{OSullivan2018,
	author = {O'Sullivan, Cormac},
	title = {A vertex and edge deletion game on graphs},
	journal = {Integers},
	volume = {18},
	pages = {G03},
	year = {2018}
}

@article{Molena2022,
	author        = {Molena, T. H.},
	title         = {Take-Away Impartial Combinatorial Games on Hypergraphs and Other Related Geometric and Discrete Structures},
	year          = {2022},
	eprint        = {2203.09696},
	archiveprefix = {arXiv},
	primaryclass  = {math.CO},
}

@phdthesis{barnard2017,
	author = {Barnard, Kristen M.},
	title  = {Some Take-Away Games on Discrete Structures},
	school = {University of Kentucky},
	year   = {2017},
	type   = {{PhD} dissertation},
}

@article{Bou1901,
	ISSN = {0003486X, 19398980},
	author = {C. L. Bouton},
	journal = {Annals of Mathematics},
	number = {1/4},
	pages = {35--39},
	publisher = {[Annals of Mathematics, Trustees of Princeton University on Behalf of the Annals of Mathematics, Mathematics Department, Princeton University]},
	title = {Nim, A Game with a Complete Mathematical Theory},
	urldate = {2025-11-14},
	volume = {3},
	year = {1901}
}

@article{Gale1974,
	author = {Gale, David},
	title = {A curious Nim-type game},
	journal = {The American Mathematical Monthly},
	volume = {81},
	number = {8},
	pages = {876--879},
	year = {1974}
}

@article{Gale1982,
	author = {Gale, David and Neyman, Abraham},
	title = {Nim-type games},
	journal = {International Journal of Game Theory},
	volume = {11},
	number = {1},
	pages = {17--20},
	year = {1982}
}

@book{BCG18,
	title={Winning Ways for Your Mathematical Plays: Volume 1},
	author={Berlekamp, E. R. and Conway, J. H. and Guy, R. K.},
	isbn={9780429945595},
	year={2018},
	publisher={CRC Press}
}

\end{document}